\PassOptionsToPackage{unicode}{hyperref}
\PassOptionsToPackage{hyphens}{url}
\PassOptionsToPackage{dvipsnames,svgnames,x11names}{xcolor}
\documentclass[
  letterpaper,
  DIV=11,
  numbers=noendperiod]{scrartcl}
\usepackage{xcolor}
\usepackage{amsmath,amssymb}
\usepackage{iftex}
\ifPDFTeX
  \usepackage[T1]{fontenc}
  \usepackage[utf8]{inputenc}
  \usepackage{textcomp} % provide euro and other symbols
\else % if luatex or xetex
  \usepackage{unicode-math} % this also loads fontspec
  \defaultfontfeatures{Scale=MatchLowercase}
  \defaultfontfeatures[\rmfamily]{Ligatures=TeX,Scale=1}
\fi
\usepackage{lmodern}
\ifPDFTeX\else
\fi
\IfFileExists{upquote.sty}{\usepackage{upquote}}{}
\IfFileExists{microtype.sty}{% use microtype if available
  \usepackage[]{microtype}
  \UseMicrotypeSet[protrusion]{basicmath} % disable protrusion for tt fonts
}{}
\makeatletter
\@ifundefined{KOMAClassName}{% if non-KOMA class
  \IfFileExists{parskip.sty}{%
    \usepackage{parskip}
  }{% else
    \setlength{\parindent}{0pt}
    \setlength{\parskip}{6pt plus 2pt minus 1pt}}
}{% if KOMA class
  \KOMAoptions{parskip=half}}
\makeatother
\makeatletter
\ifx\paragraph\undefined\else
  \let\oldparagraph\paragraph
  \renewcommand{\paragraph}{
    \@ifstar
      \xxxParagraphStar
      \xxxParagraphNoStar
  }
  \newcommand{\xxxParagraphStar}[1]{\oldparagraph*{#1}\mbox{}}
  \newcommand{\xxxParagraphNoStar}[1]{\oldparagraph{#1}\mbox{}}
\fi
\ifx\subparagraph\undefined\else
  \let\oldsubparagraph\subparagraph
  \renewcommand{\subparagraph}{
    \@ifstar
      \xxxSubParagraphStar
      \xxxSubParagraphNoStar
  }
  \newcommand{\xxxSubParagraphStar}[1]{\oldsubparagraph*{#1}\mbox{}}
  \newcommand{\xxxSubParagraphNoStar}[1]{\oldsubparagraph{#1}\mbox{}}
\fi
\makeatother

\usepackage{longtable,booktabs,array}
\usepackage{calc} % for calculating minipage widths
\usepackage{etoolbox}
\makeatletter
\patchcmd\longtable{\par}{\if@noskipsec\mbox{}\fi\par}{}{}
\makeatother
\IfFileExists{footnotehyper.sty}{\usepackage{footnotehyper}}{\usepackage{footnote}}
\makesavenoteenv{longtable}
\usepackage{graphicx}
\makeatletter
\newsavebox\pandoc@box
\newcommand*\pandocbounded[1]{% scales image to fit in text height/width
  \sbox\pandoc@box{#1}%
  \Gscale@div\@tempa{\textheight}{\dimexpr\ht\pandoc@box+\dp\pandoc@box\relax}%
  \Gscale@div\@tempb{\linewidth}{\wd\pandoc@box}%
  \ifdim\@tempb\p@<\@tempa\p@\let\@tempa\@tempb\fi% select the smaller of both
  \ifdim\@tempa\p@<\p@\scalebox{\@tempa}{\usebox\pandoc@box}%
  \else\usebox{\pandoc@box}%
  \fi%
}
\def\fps@figure{htbp}
\makeatother

\providecommand{\tightlist}{%
  \setlength{\itemsep}{0pt}\setlength{\parskip}{0pt}}

\usepackage[backend=biber]{biblatex}
\KOMAoption{captions}{tableheading}
\makeatletter
\@ifpackageloaded{caption}{}{\usepackage{caption}}
\AtBeginDocument{%
\ifdefined\contentsname
  \renewcommand*\contentsname{Table of contents}
\else
  \newcommand\contentsname{Table of contents}
\fi
\ifdefined\listfigurename
  \renewcommand*\listfigurename{List of Figures}
\else
  \newcommand\listfigurename{List of Figures}
\fi
\ifdefined\listtablename
  \renewcommand*\listtablename{List of Tables}
\else
  \newcommand\listtablename{List of Tables}
\fi
\ifdefined\figurename
  \renewcommand*\figurename{Figure}
\else
  \newcommand\figurename{Figure}
\fi
\ifdefined\tablename
  \renewcommand*\tablename{Table}
\else
  \newcommand\tablename{Table}
\fi
}
\@ifpackageloaded{float}{}{\usepackage{float}}
\floatstyle{ruled}
\@ifundefined{c@chapter}{\newfloat{codelisting}{h}{lop}}{\newfloat{codelisting}{h}{lop}[chapter]}
\floatname{codelisting}{Listing}

\usepackage{amsthm}
\theoremstyle{plain}
\newtheorem{lemma}{Lemma}[section]
\theoremstyle{plain}
\newtheorem{theorem}{Theorem}[section]
\theoremstyle{definition}
\newtheorem{definition}{Definition}[section]
\theoremstyle{remark}
\AtBeginDocument{}

\makeatother
\makeatletter
\@ifpackageloaded{caption}{}{\usepackage{caption}}
\@ifpackageloaded{subcaption}{}{\usepackage{subcaption}}
\makeatother
\usepackage{bookmark}
\IfFileExists{xurl.sty}{\usepackage{xurl}}{} % add URL line breaks if available
\hypersetup{
  pdftitle={Global Warming Has Been Accelerating Since At Least 1990},
  pdfauthor={J. Eduardo Vera-Valdés},
  pdfkeywords={Global warming acceleration, Temperature trends, Climate
change, Acceleration},
  colorlinks=true,
  linkcolor={blue},
  filecolor={Maroon},
  citecolor={Blue},
  urlcolor={Blue},
  pdfcreator={LaTeX via pandoc}}

\title{Global Warming Has Been Accelerating Since At Least
1990\footnote{Note that, technically speaking, warming refers to
  increasing trends, or first derivatives of temperature, while
  acceleration refers to second derivatives. That would make warming
  acceleration refer to third derivatives of temperature. However, the
  term `global warming acceleration' is widely used in the literature to
  refer to increasing rates of temperature change.}}
\author{J. Eduardo Vera-Valdés\\
CoRE and Department of Mathematical Sciences, Aalborg University\\
\texttt{eduardo@math.aau.dk}}
\date{}
\begin{document}
\maketitle
\begin{abstract}
We investigate acceleration in global temperature, defining acceleration
as a supralinear (greater-than-linear) increase over time. We develop a
statistical framework to test for supralinear trends using a
linearithmic specification. Our results indicate evidence of
acceleration in global temperature since at least 1990, with
significance strengthening as more recent data are included. In
contrast, evidence for acceleration under a quadratic specification is
significant only in the longest estimation window. We also show that, if
the true temperature trend is supralinear, standard break-point tests
will eventually detect changes in the slope of a linear trend model,
which may explain reported structural breaks in global temperature
trends.
\end{abstract}

A recent study reported that global mean surface temperature (GMST) has
accelerated, with over 98\% confidence, since around 2015. The analysis
first filtered the data to control for additional effects such as the El
Niño effect and solar variation, and then used quadratic and piecewise
linear trend specifications to test for acceleration in GMST
\autocite{Forster2026}.

These two specifications are useful, but each has clear limitations. A
piecewise linear trend allows a structural break and can capture changes
in the warming rate, but it assumes a single discrete shift at a
specific date. A quadratic trend guarantees a positive second derivative
when the quadratic coefficient is positive, but it imposes a rigid
functional form with a constant increase in the warming rate. As noted
by the authors, \emph{the quadratic model, while a good choice to test
for acceleration, is a poor choice to estimate the present warming rate}
\autocite{Forster2026}.

Furthermore, two-step procedures are less efficient than single-step
procedures, which can lead to less precise estimates of the trend and
acceleration parameters
\autocite{newey_mcfadden_1994,pindyck_rubinfeld_1998}.

This motivates testing acceleration with a more flexible trend
specification that makes testing for acceleration more robust to the
choice of functional form, and using a single-step estimation procedure
that can improve precision in the acceleration estimate.

\subsection*{Testing for acceleration}\label{sec-results}
\addcontentsline{toc}{subsection}{Testing for acceleration}

The literature has used quadratic trends to analyse acceleration in GMST
\autocite{Forster2026}. A quadratic trend is specified as follows:
\begin{equation}\protect\phantomsection\label{eq-quadratic-trend}{
x_t = \alpha + \lambda t + \beta t^2 + \varepsilon_t,
}\end{equation}

where \(x_t\) is the GMST anomaly at time \(t\), \(\alpha\) is the
intercept, \(\lambda\) is the slope of the linear trend, \(\beta\) is
the coefficient of the quadratic term, and \(\varepsilon_t\) is a random
error term. The presence of acceleration is tested by evaluating whether
\(\beta > 0\). If \(\beta = 0\), then there is no acceleration.

The motivation for using quadratic trends is that they allow for a
positive second derivative, which is a necessary condition for
acceleration. However, quadratic trends are not sufficient to capture
all possible forms of acceleration, as they impose a specific functional
form on the trend. A quadratic trend implies that the rate of change
increases at a constant rate over time, which may not be the case in
reality.

Hence, we propose a simple test for acceleration considering a
linearithmic specification of the trend function, which allows for a
more flexible form of acceleration compared to a quadratic trend. The
linearithmic specification is given by:
\begin{equation}\protect\phantomsection\label{eq-linearithmic-trend}{
x_t = \alpha + \beta \log(t) t + \varepsilon_t,
}\end{equation}

where \(\alpha\) is the intercept, \(\beta\) is the coefficient of the
linearithmic term, and \(\varepsilon_t\) is a random error term. The
presence of acceleration is tested by evaluating whether \(\beta > 0\).
If \(\beta = 0\), then there is no acceleration; see the Supplementary
Material (SM) for technical details.

Note that the linearithmic specification provides a lower bound for
acceleration in a polynomial trend: it grows faster than a linear trend
but slower than any supralinear trend of polynomial form; see the SM for
details.

To account for the influence of the El Niño effect and solar variation
on GMST, we include these factors as additional control variables in the
specifications. The control variables are included directly in a
single-step estimation, rather than through a two-step adjustment, which
can improve precision in the acceleration estimate
\autocite{newey_mcfadden_1994,pindyck_rubinfeld_1998}.

Let \(N_t\) denote the El Niño effect at time \(t\), and let \(S_t\)
denote a measure of solar irradiance at time \(t\). The linearithmic and
quadratic specifications with additional control variables are given by:
\[
x_t = \alpha + \beta \log(t) t + \gamma_1 N_t + \gamma_2 S_t + \varepsilon_t,
\]

and

\[
x_t = \alpha + \lambda t + \beta t^2 + \gamma_1 N_t + \gamma_2 S_t + \varepsilon_t,
\]

respectively. Above, \(\gamma_1\) and \(\gamma_2\) are the coefficients
for the El Niño effect and solar variation. Analogous to the
specifications without additional controls, the presence of acceleration
is tested by evaluating whether \(\beta > 0\) in both specifications.

We test for acceleration in GMST using the five most widely used
temperature series: HadCRUT \autocite{HadCRUT5}, GISTEMP
\autocite{GISTEMP}, NOAA \autocite{NOAA}, Berkeley Earth
\autocite{BerkeleyEarth}, and ERA5 \autocite{ERA5}. The estimation
window is from 1970-01 to 2025-12, which allows us to analyse the period
since the observed increase in the warming rate in the 1970s
\autocite{gadea-rivas_trends_2024,estrada_2013} and to include the most
recent data available \autocite{Forster2026}. We use heteroskedasticity
and autocorrelation consistent (HAC) standard errors to account for
potential serial correlation and heteroskedasticity in the error terms
\autocite{newey-west-1987}.

Results for the linearithmic and quadratic specifications are presented
in Table~\ref{tbl-results-linearithmic} and
Table~\ref{tbl-results-quadratic}, respectively. For each specification,
we present the results without and with additional control variables.

\begin{longtable}[]{@{}
  >{\raggedright\arraybackslash}p{(\linewidth - 10\tabcolsep) * \real{0.1667}}
  >{\raggedleft\arraybackslash}p{(\linewidth - 10\tabcolsep) * \real{0.1667}}
  >{\raggedleft\arraybackslash}p{(\linewidth - 10\tabcolsep) * \real{0.1667}}
  >{\raggedleft\arraybackslash}p{(\linewidth - 10\tabcolsep) * \real{0.1667}}
  >{\raggedleft\arraybackslash}p{(\linewidth - 10\tabcolsep) * \real{0.1667}}
  >{\raggedleft\arraybackslash}p{(\linewidth - 10\tabcolsep) * \real{0.1667}}@{}}
\caption{Estimation results for the linearithmic specification, with and
without additional covariates. Standard errors, calculated using a HAC
estimator, are shown in
parentheses.}\label{tbl-results-linearithmic}\tabularnewline
\toprule\noalign{}
\begin{minipage}[b]{\linewidth}\raggedright
\textbf{Dataset-Model}
\end{minipage} & \begin{minipage}[b]{\linewidth}\raggedleft
\(\alpha\)
\end{minipage} & \begin{minipage}[b]{\linewidth}\raggedleft
\(\beta\)
\end{minipage} & \begin{minipage}[b]{\linewidth}\raggedleft
\(\gamma_1\)
\end{minipage} & \begin{minipage}[b]{\linewidth}\raggedleft
\(\gamma_2\)
\end{minipage} & \begin{minipage}[b]{\linewidth}\raggedleft
\(\lambda\)
\end{minipage} \\
\midrule\noalign{}
\endfirsthead
\toprule\noalign{}
\begin{minipage}[b]{\linewidth}\raggedright
\textbf{Dataset-Model}
\end{minipage} & \begin{minipage}[b]{\linewidth}\raggedleft
\(\alpha\)
\end{minipage} & \begin{minipage}[b]{\linewidth}\raggedleft
\(\beta\)
\end{minipage} & \begin{minipage}[b]{\linewidth}\raggedleft
\(\gamma_1\)
\end{minipage} & \begin{minipage}[b]{\linewidth}\raggedleft
\(\gamma_2\)
\end{minipage} & \begin{minipage}[b]{\linewidth}\raggedleft
\(\lambda\)
\end{minipage} \\
\midrule\noalign{}
\endhead
\bottomrule\noalign{}
\endlastfoot
Berkeley-Linearithmic & 0.2262 & 0.1725 & & & \\
& (0.0274) & (0.0089) & & & \\
Berkeley-Linearithmic & 0.1975 & 0.1723 & 0.0661 & 0.0004 & \\
& (0.0267) & (0.0068) & (0.0088) & (0.0002) & \\
ERA5-Linearithmic & 0.2077 & 0.1788 & & & \\
& (0.0352) & (0.0104) & & & \\
ERA5-Linearithmic & 0.1600 & 0.1799 & 0.0669 & 0.0006 & \\
& (0.0324) & (0.0080) & (0.0105) & (0.0002) & \\
GISTEMP-Linearithmic & 0.2371 & 0.1692 & & & \\
& (0.0262) & (0.0087) & & & \\
GISTEMP-Linearithmic & 0.2071 & 0.1692 & 0.0619 & 0.0004 & \\
& (0.0292) & (0.0072) & (0.0092) & (0.0002) & \\
HadCRUT-Linearithmic & 0.2438 & 0.1711 & & & \\
& (0.0264) & (0.0082) & & & \\
HadCRUT-Linearithmic & 0.2141 & 0.1711 & 0.0633 & 0.0004 & \\
& (0.0255) & (0.0062) & (0.0091) & (0.0001) & \\
NOAA-Linearithmic & 0.1640 & 0.1632 & & & \\
& (0.0256) & (0.0084) & & & \\
NOAA-Linearithmic & 0.1397 & 0.1628 & 0.0603 & 0.0004 & \\
& (0.0270) & (0.0069) & (0.0091) & (0.0002) & \\
\end{longtable}

\begin{longtable}[]{@{}
  >{\raggedright\arraybackslash}p{(\linewidth - 10\tabcolsep) * \real{0.1667}}
  >{\raggedleft\arraybackslash}p{(\linewidth - 10\tabcolsep) * \real{0.1667}}
  >{\raggedleft\arraybackslash}p{(\linewidth - 10\tabcolsep) * \real{0.1667}}
  >{\raggedleft\arraybackslash}p{(\linewidth - 10\tabcolsep) * \real{0.1667}}
  >{\raggedleft\arraybackslash}p{(\linewidth - 10\tabcolsep) * \real{0.1667}}
  >{\raggedleft\arraybackslash}p{(\linewidth - 10\tabcolsep) * \real{0.1667}}@{}}
\caption{Estimation results for the quadratic specification, with and
without additional covariates. Standard errors, calculated using a HAC
estimator, are shown in
parentheses.}\label{tbl-results-quadratic}\tabularnewline
\toprule\noalign{}
\begin{minipage}[b]{\linewidth}\raggedright
\textbf{Dataset-Model}
\end{minipage} & \begin{minipage}[b]{\linewidth}\raggedleft
\(\alpha\)
\end{minipage} & \begin{minipage}[b]{\linewidth}\raggedleft
\(\beta\)
\end{minipage} & \begin{minipage}[b]{\linewidth}\raggedleft
\(\gamma_1\)
\end{minipage} & \begin{minipage}[b]{\linewidth}\raggedleft
\(\gamma_2\)
\end{minipage} & \begin{minipage}[b]{\linewidth}\raggedleft
\(\lambda\)
\end{minipage} \\
\midrule\noalign{}
\endfirsthead
\toprule\noalign{}
\begin{minipage}[b]{\linewidth}\raggedright
\textbf{Dataset-Model}
\end{minipage} & \begin{minipage}[b]{\linewidth}\raggedleft
\(\alpha\)
\end{minipage} & \begin{minipage}[b]{\linewidth}\raggedleft
\(\beta\)
\end{minipage} & \begin{minipage}[b]{\linewidth}\raggedleft
\(\gamma_1\)
\end{minipage} & \begin{minipage}[b]{\linewidth}\raggedleft
\(\gamma_2\)
\end{minipage} & \begin{minipage}[b]{\linewidth}\raggedleft
\(\lambda\)
\end{minipage} \\
\midrule\noalign{}
\endhead
\bottomrule\noalign{}
\endlastfoot
Berkeley-Quadratic & 0.2291 & 0.0005 & & & 0.7913 \\
& (0.0424) & (0.0003) & & & (0.2024) \\
Berkeley-Quadratic & 0.2217 & 0.0007 & 0.0684 & 0.0004 & 0.6742 \\
& (0.0305) & (0.0002) & (0.0083) & (0.0001) & (0.1486) \\
ERA5-Quadratic & 0.2383 & 0.0008 & & & 0.6587 \\
& (0.0542) & (0.0003) & & & (0.2325) \\
ERA5-Quadratic & 0.2133 & 0.0010 & 0.0711 & 0.0006 & 0.5417 \\
& (0.0404) & (0.0002) & (0.0092) & (0.0002) & (0.1637) \\
GISTEMP-Quadratic & 0.2709 & 0.0008 & & & 0.5954 \\
& (0.0389) & (0.0003) & & & (0.1917) \\
GISTEMP-Quadratic & 0.2619 & 0.0010 & 0.0662 & 0.0004 & 0.4827 \\
& (0.0297) & (0.0002) & (0.0080) & (0.0001) & (0.1420) \\
HadCRUT-Quadratic & 0.2495 & 0.0006 & & & 0.7684 \\
& (0.0397) & (0.0003) & & & (0.1909) \\
HadCRUT-Quadratic & 0.2405 & 0.0007 & 0.0658 & 0.0004 & 0.6563 \\
& (0.0301) & (0.0002) & (0.0085) & (0.0001) & (0.1427) \\
NOAA-Quadratic & 0.2025 & 0.0008 & & & 0.5388 \\
& (0.0403) & (0.0003) & & & (0.1855) \\
NOAA-Quadratic & 0.1989 & 0.0010 & 0.0648 & 0.0004 & 0.4270 \\
& (0.0292) & (0.0002) & (0.0079) & (0.0001) & (0.1355) \\
\end{longtable}

The results show evidence of acceleration in GMST for the linearithmic
specification, both without and with additional control variables. The
estimated coefficient for the linearithmic term is positive and
statistically significant in all cases. For the quadratic specification,
the estimated coefficient for the quadratic term is positive and
statistically significant in all cases with additional control
variables, but it is significant in only three of the five datasets when
no additional control variables are included.

The results from the quadratic specification are consistent with the
recent findings that additional control variables are necessary to find
significant acceleration in GMST when using a quadratic specification
\autocite{Forster2026}. However, our results show that the evidence for
acceleration in GMST is more robust when we use a linearithmic
specification, as we find significant acceleration in all datasets in
specifications without additional control variables.

Figure~\ref{fig-fitted-values} presents the actual and fitted values for
the four specifications considered in the analysis: linearithmic and
quadratic specifications, without and with additional control variables,
respectively.

\begin{figure}

\begin{minipage}[t]{0.50\linewidth}

\raisebox{-\height}{

\includegraphics[width=1\linewidth,height=\textheight,keepaspectratio,alt={Linearithmic fit without additional covariates (1970-2025).}]{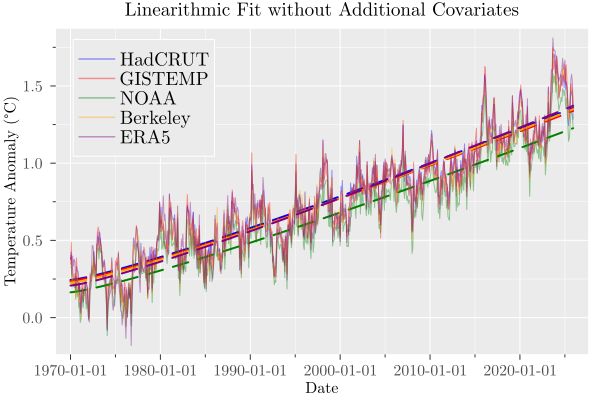}

}

\subcaption{\label{}Linearithmic fit without additional covariates
(1970-2025).}
\end{minipage}%
\begin{minipage}[t]{0.50\linewidth}

\raisebox{-\height}{

\includegraphics[width=1\linewidth,height=\textheight,keepaspectratio,alt={Linearithmic fit with additional covariates (1970-2025).}]{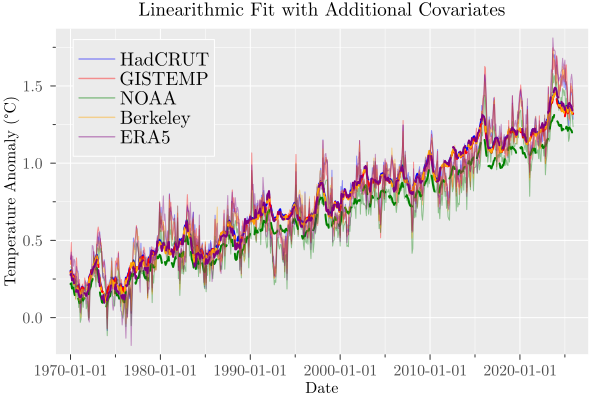}

}

\subcaption{\label{}Linearithmic fit with additional covariates
(1970-2025).}
\end{minipage}%
\newline
\begin{minipage}[t]{0.50\linewidth}

\raisebox{-\height}{

\includegraphics[width=1\linewidth,height=\textheight,keepaspectratio,alt={Quadratic fit without additional covariates (1970-2025).}]{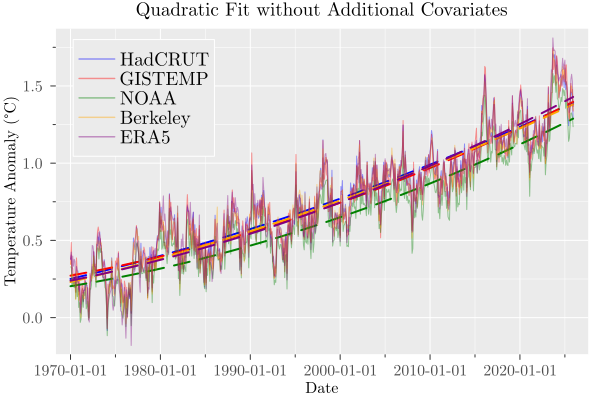}

}

\subcaption{\label{}Quadratic fit without additional covariates
(1970-2025).}
\end{minipage}%
\begin{minipage}[t]{0.50\linewidth}

\raisebox{-\height}{

\includegraphics[width=1\linewidth,height=\textheight,keepaspectratio,alt={Quadratic fit with additional covariates (1970-2025).}]{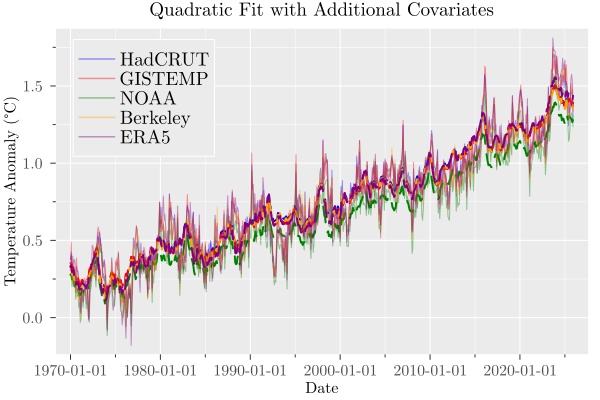}

}

\subcaption{\label{}Quadratic fit with additional covariates
(1970-2025).}
\end{minipage}%

\caption{\label{fig-fitted-values}Actual and fitted values for the
linearithmic and quadratic specifications from 1970 to 2025, without and
with additional covariates, respectively.}

\end{figure}%

The figure shows that both specifications capture the overall trend in
GMST and provide similar fitted values. One of the main advantages of
the linearithmic specification is that it achieves a fit comparable to
that of the quadratic specification, but with fewer parameters, which
can lead to more precise estimates of the acceleration parameter. This
is particularly relevant because the linear and quadratic components of
the quadratic specification are highly correlated, which can lead to
multicollinearity issues and less precise estimates
\autocite{belsley1980regression}. In contrast, the linearithmic
specification does not suffer from this issue, as it includes only one
parameter related to acceleration.

\subsection*{Is acceleration a recent phenomenon in
GMST?}\label{is-acceleration-a-recent-phenomenon-in-gmst}
\addcontentsline{toc}{subsection}{Is acceleration a recent phenomenon in
GMST?}

To determine whether acceleration in GMST is a recent phenomenon, we
deploy an expanding window framework. Starting with the subsample from
1970 to 1990, we test for acceleration in GMST using both the
linearithmic and quadratic specifications, with and without additional
control variables. We then repeat the procedure by expanding the
estimation window up to 2025. To control for the multiple testing
problem, we use a bootstrap procedure to calculate the critical values
for the test statistics; see the SM for details.

The year 1970 is used as the starting point to align with the literature
on the observed increase in the warming rate since the 1970s
\autocite{gadea-rivas_trends_2024,estrada_2013}. The first end point is
set to 1990 so that the sample size is large enough to provide a
reasonable number of observations for the bootstrap procedure to
calculate the critical values for the test statistics. The last end
point is set to 2025 to include the most recent data available.

Note that the quadratic and piecewise linear specifications have been
used at several points in the literature to analyse acceleration in
GMST, so it could be argued that multiple testing may be a concern. By
using a bootstrap procedure to calculate the critical values, we can
account for the multiple testing problem and ensure that our results are
robust to it \autocite{westfall1993resampling}.

Figure~\ref{fig-bootstrap-pvals} presents the bootstrap-adjusted
p-values for the test of acceleration for all datasets and
specifications considered.

\begin{figure}

\begin{minipage}[t]{0.50\linewidth}

\raisebox{-\height}{

\includegraphics[width=1\linewidth,height=\textheight,keepaspectratio,alt={Linearithmic fit without additional covariates, bootstrap adjusted p-values.}]{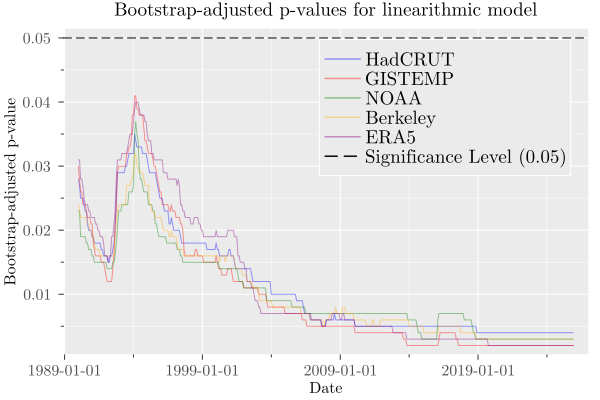}

}

\subcaption{\label{}Linearithmic fit without additional covariates,
bootstrap adjusted p-values.}
\end{minipage}%
\begin{minipage}[t]{0.50\linewidth}

\raisebox{-\height}{

\includegraphics[width=1\linewidth,height=\textheight,keepaspectratio,alt={Linearithmic fit with additional covariates, bootstrap adjusted p-values.}]{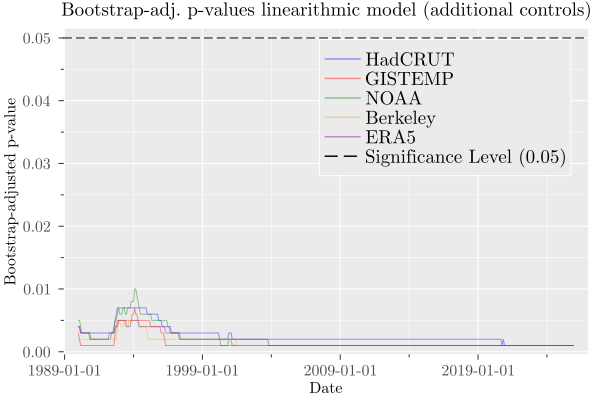}

}

\subcaption{\label{}Linearithmic fit with additional covariates,
bootstrap adjusted p-values.}
\end{minipage}%
\newline
\begin{minipage}[t]{0.50\linewidth}

\raisebox{-\height}{

\includegraphics[width=1\linewidth,height=\textheight,keepaspectratio,alt={Quadratic fit without additional covariates, bootstrap adjusted p-values.}]{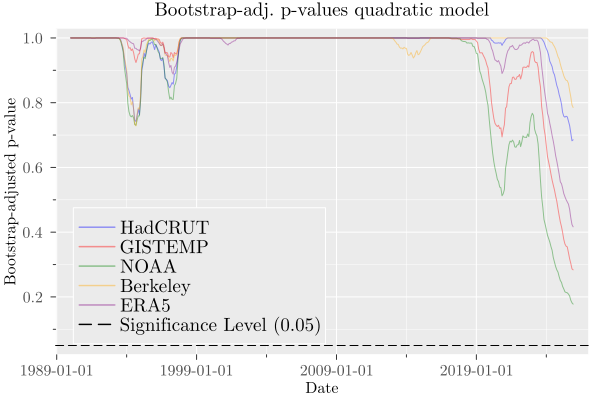}

}

\subcaption{\label{}Quadratic fit without additional covariates,
bootstrap adjusted p-values.}
\end{minipage}%
\begin{minipage}[t]{0.50\linewidth}

\raisebox{-\height}{

\includegraphics[width=1\linewidth,height=\textheight,keepaspectratio,alt={Quadratic fit with additional covariates, bootstrap adjusted p-values.}]{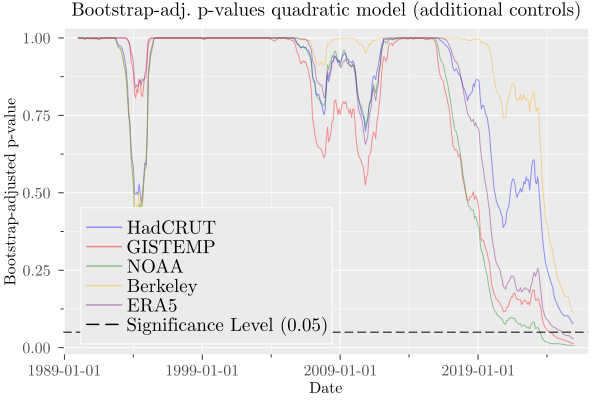}

}

\subcaption{\label{}Quadratic fit with additional covariates, bootstrap
adjusted p-values.}
\end{minipage}%

\caption{\label{fig-bootstrap-pvals}Actual and fitted values for the
linearithmic and quadratic specifications from 1970 to 2025, without and
with additional covariates, respectively.}

\end{figure}%

For the quadratic specification, the results show evidence of
acceleration in GMST only in the longest estimation window and only in
the specification with additional control variables. Nevertheless, once
we control for multiple testing using a bootstrap procedure, evidence at
the 5\% significance level is retained only for a subset of the
datasets.

In an additional exercise, we show that if the exponent of the
polynomial trend is estimated rather than fixed at two as it is in the
quadratic specification, then there is evidence of acceleration in GMST
across all datasets starting in 2009; see the SM for details. The
exponent is estimated to be smaller than two, which suggests that the
quadratic specification may not be the best choice to capture the
trending dynamics in GMST.

In contrast, the results for the linearithmic specification show that
there is evidence of acceleration in GMST since at least 1990, and the
evidence becomes stronger as we include more recent data. The results
are more significant when we include additional control variables. These
results suggest that acceleration in GMST is not just a recent
phenomenon, but it has been present for several decades.

\subsection*{Can structural breaks in linear trends explain the observed
acceleration in
GMST?}\label{can-structural-breaks-in-linear-trends-explain-the-observed-acceleration-in-gmst}
\addcontentsline{toc}{subsection}{Can structural breaks in linear trends
explain the observed acceleration in GMST?}

Another common specification used in the literature to analyse
acceleration in GMST is a piecewise linear trend, which allows for a
structural break in the linear trend. This specification can capture
changes in the warming rate over time, but it assumes that there is a
discrete change in the warming rate at a specific point in time, which
may not be the case if the warming rate is continuously increasing.

We show that if the true trend in GMST is supralinear, then standard
break-point tests will eventually detect changes in the slope of a
linear trend model, which could be misinterpreted as structural breaks
in the warming rate; see the SM for details. In other words, if the true
trend in GMST is supralinear, then the piecewise linear specification
will eventually detect changes in the slope of the linear trend, which
could be misinterpreted as structural breaks in the warming rate. This
could explain the reported structural breaks in GMST trends, such as the
increase in the warming rate in the 1970s
\autocite{gadea-rivas_trends_2024,estrada_2013}, the failure to
conclusively establish a second change point using data up to 2023
\autocite{beaulieu_recent_2024}, and the recent finding of a change
point in the linear trend in 2015 using data up to 2024
\autocite{Forster2026}.

\subsection*{Discussion and
implications}\label{discussion-and-implications}
\addcontentsline{toc}{subsection}{Discussion and implications}

We have shown that there is evidence of acceleration in GMST since at
least 1990, and the evidence becomes stronger as we include more recent
data. The results are more significant when we include the El Niño and
solar variation as additional control variables. In contrast, the
evidence for acceleration using a quadratic specification is only
significant in the longest estimation window and only when additional
control variables are included.

Furthermore, we have shown that if the true trend in GMST is
supralinear, then standard break-point tests will eventually detect
changes in the slope of a linear trend model, which could explain the
reported structural breaks in GMST trends. For modelling and policy
purposes, it is important to understand whether the observed
acceleration in GMST is due to an underlying supralinear trend or due to
structural breaks in the warming rate. If the true trend in GMST is
supralinear, then the piecewise linear specification will continue to
underrepresent the true trend in GMST. This is because the piecewise
linear specification will only capture one-time changes in the slope of
the linear trend, while the true trend in GMST may be continuously
increasing.

Our results show that acceleration has been present in GMST for several
decades. The presence of acceleration suggests that the rate of warming
is increasing over time, which could lead to more severe impacts on
ecosystems, human health, and the economy. It also highlights the
urgency of taking action to mitigate greenhouse gas emissions and adapt
to the changing climate. There is already evidence that the airborne
fraction of anthropogenic carbon emissions has been increasing in recent
decades \autocite{vera-valdes2026}, and that Earth energy imbalance has
more than doubled in recent decades
\autocite{mauritsenEarthsEnergyImbalance2025}, which is consistent with
the observed acceleration in GMST.

\printbibliography[heading=none]

\newpage{}

\section*{Methods}\label{methods}
\addcontentsline{toc}{section}{Methods}

\section*{Data}\label{data}
\addcontentsline{toc}{section}{Data}

Global temperature anomalies are obtained from five widely used
datasets: HadCRUT \autocite{HadCRUT5}, GISTEMP \autocite{GISTEMP}, NOAA
\autocite{NOAA}, Berkeley Earth \autocite{BerkeleyEarth}, and ERA5
\autocite{ERA5}. We use anomaly series adjusted relative to the
1850-1900 mean (or equivalent pre-industrial baseline, following each
provider's guidance). ERA5 begins in 1940 and GISTEMP in 1880, while the
remaining datasets begin in 1850. The harmonised data used in this study
are retrieved from
\url{https://everval.github.io/Global-Temperature-Anomalies/}.

Figure~\ref{fig-global-temp-oni} shows the global temperature anomalies
for the five datasets.

\begin{figure}

\centering{

\includegraphics[width=0.5\linewidth,height=\textheight,keepaspectratio]{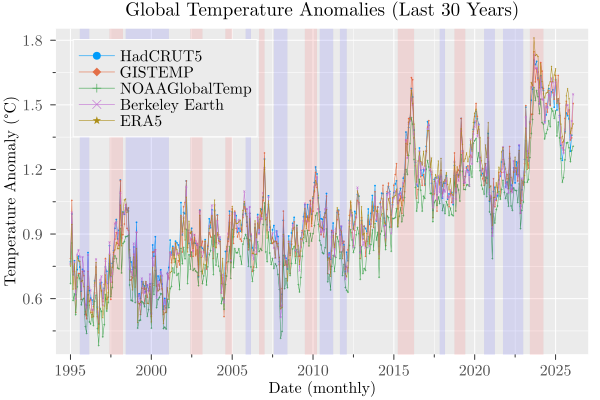}

}

\caption{\label{fig-global-temp-oni}Global temperature anomalies and El
Niño/La Niña periods. La Niña and El Niño events are shaded in blue and
red, respectively.}

\end{figure}%

Figure~\ref{fig-global-temp-oni} also shows the Oceanic Niño Index
(ONI), which is a commonly used measure of the El Niño-Southern
Oscillation (ENSO) phenomenon. Data for the ONI is obtained from the
National Oceanic and Atmospheric Administration (NOAA)
\autocite{SST_Part1,SST_Part2}.

Solar variation is measured using the number of sunspots, which is a
commonly used proxy for solar irradiance \autocite{Forster2026}. Data
for sunspots is obtained from WDC-SILSO, Royal Observatory of Belgium
\autocite{SILSO}.

The data is used in its monthly frequency, and the analysis is conducted
on the monthly anomalies. The use of monthly data allows us to capture
more granular dynamics in GMST and to better account for the influence
of ENSO and solar variation, which can have significant impacts on GMST
at monthly timescales \autocite{vera-valdes2024}.

\section*{Reproducibility Statement}\label{reproducibility-statement}
\addcontentsline{toc}{section}{Reproducibility Statement}

The code used for data processing, analysis, and visualisation in this
study is available at the manuscript's GitHub repository. All analyses
were conducted using Julia version 1.12.2. A \texttt{Project.toml} and
\texttt{Manifest.toml} file are included in the repository to ensure
that the computational environment can be replicated. The manuscript was
written using Quarto version 1.8.25 and the typst engine for PDF output.
The manuscript's website contains links to all replication notebooks and
additional resources.

\section*{Theoretical Background}\label{theoretical-background}
\addcontentsline{toc}{section}{Theoretical Background}

Acceleration is defined as a positive second derivative of the trend
function with respect to time.

\begin{definition}[Acceleration]\protect\hypertarget{def-acceleration}{}\label{def-acceleration}

Let \(x(t)\) be the trend in a time series. We say that there is
acceleration if:
\begin{equation}\protect\phantomsection\label{eq-acceleration}{
\frac{d^2 x_t}{dt^2} > 0.
}\end{equation}

\end{definition}

Note that a linear trend in GMST would mean no acceleration. Hence, the
existence of acceleration in a time series is related to the concept of
it having a trend that grows faster than a linear trend. We denote such
a trend as supralinear.

Consider a general polynomial trend specification of the form:

\begin{equation}\protect\phantomsection\label{eq-polynomial-trend}{
x_t = \alpha + \beta t^\delta,
}\end{equation}

where \(\alpha\) is the intercept, \(\beta\) is the coefficient
associated with the trend, and \(\delta\) is the degree of the
polynomial trend. Note that when \(\delta = 1\), the specification
reduces to a linear trend, and when \(\delta > 1\), the specification
captures supralinear trends.

Lemma~\ref{lem-linearithmic-speed} shows that the linearithmic trend
grows faster than a linear trend but slower than any polynomial trend
with degree greater than one.

\begin{lemma}[]\protect\hypertarget{lem-linearithmic-speed}{}\label{lem-linearithmic-speed}

Let \(x_t\) follow a linearithmic trend as in
Equation~\ref{eq-linearithmic-trend} with \(\beta > 0\). Then, the speed
of growth of \(x_t\) satisfies:

\begin{itemize}
\item
  The speed is faster than linear: \[
  \lim_{t \to \infty} \frac{x_t}{t} = \infty.
  \]
\item
  The speed is slower than polynomial for any \(\delta > 1\): \[
  \lim_{t \to \infty} \frac{x_t}{t^\delta} = 0.
  \]
\end{itemize}

\end{lemma}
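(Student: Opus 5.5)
The plan is to work with the deterministic trend part of Equation~\ref{eq-linearithmic-trend}, namely \(x_t = \alpha + \beta t\log(t)\) with \(\beta>0\). The error term is set aside because, as in the definition of acceleration, the lemma concerns the trend function. Both limits then reduce to elementary facts about \(\log(t)\) once each ratio is split into an intercept piece and a trend piece.

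For the first claim, write
\[
\frac{x_t}{t} = \frac{\alpha}{t} + \beta \log(t).
\]
As \(t\to\infty\), the term \(\alpha/t\) tends to \(0\). Since \(\beta>0\) and \(\log(t)\to\infty\), the term \(\beta\log(t)\) diverges to \(+\infty\). Hence the sum diverges to \(+\infty\). The sign of \(\beta\) is used exactly here: with \(\beta<0\) the limit would be \(-\infty\).

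For the second claim, fix \(\delta>1\), set \(\epsilon=\delta-1>0\), and write
\[
\frac{x_t}{t^\delta} = \frac{\alpha}{t^\delta} + \beta\,\frac{\log(t)}{t^{\epsilon}}.
\]
The first term tends to \(0\) because \(\delta>0\). The only step with any content is showing \(\log(t)/t^\epsilon\to 0\). I would prove this with L'H\^opital's rule, since both numerator and denominator diverge: the ratio of derivatives is
\[
\frac{1/t}{\epsilon t^{\epsilon-1}} = \frac{1}{\epsilon t^{\epsilon}} \to 0.
\]
A self-contained alternative avoids L'H\^opital. From \(\log(u)\le u\) for \(u>0\), take \(u=t^{\epsilon/2}\) to get \(\log(t)\le \tfrac{2}{\epsilon}t^{\epsilon/2}\). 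Then
\[
0\le \frac{\log(t)}{t^\epsilon} \le \frac{2}{\epsilon}\,t^{-\epsilon/2} \quad \text{for } t\ge 1,
\]
and squeezing gives the limit. Multiplying by the constant \(\beta\) preserves the zero limit, which completes the second claim.

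I do not expect a real obstacle. The only point needing care is the comparison of \(\log(t)\) with \(t^\epsilon\) for arbitrarily small \(\epsilon\), and I would handle it with the explicit bound above so the argument is uniform in \(\delta>1\).
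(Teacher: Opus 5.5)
Your proof is correct and follows the paper's argument: the first limit reduces to $\beta\log(t)\to\infty$, and the second to $\beta\log(t)/t^{\delta-1}\to 0$ via L'H\^opital's rule. You are slightly more careful than the paper in two ways: you keep the intercept $\alpha$, which the paper silently drops, and you give the elementary bound $\log(t)\le \tfrac{2}{\epsilon}t^{\epsilon/2}$ as an alternative to L'H\^opital.
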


\begin{proof}
For the first part, note that:

\[
\lim_{t \to \infty} \frac{x_t}{t} = \lim_{t \to \infty} \frac{\beta \log(t) t}{t} = \lim_{t \to \infty} \beta \log(t) = \infty.
\]

For the second part, note that: \[
\lim_{t \to \infty} \frac{x_t}{t^\delta} = \lim_{t \to \infty} \frac{\beta \log(t) t}{t^\delta} = \lim_{t \to \infty} \frac{\beta \log(t)}{t^{\delta - 1}},
\]

which is one of the indeterminate forms covered by L'Hôpital's rule.
Applying the rule, we get:

\[
\lim_{t \to \infty} \frac{\beta \log(t)}{t^{\delta - 1}} = \lim_{t \to \infty} \frac{\beta / t}{(\delta - 1) t^{\delta - 2}} = \lim_{t \to \infty} \frac{\beta}{(\delta - 1) t^{\delta - 1}} = 0,
\]

which concludes the proof.
\end{proof}

Hence, the linearithmic trend provides a lower bound for supralinear
trends in the polynomial specification.

The next theorem presents the conditions under which the linearithmic
and polynomial specifications capture acceleration in the sense of a
positive second derivative of the trend function.

\begin{theorem}[]\protect\hypertarget{thm-supralinear-trend}{}\label{thm-supralinear-trend}

Let \(x_t\) be a function of time. \(x_t\) is \emph{accelerating} if any
of the following conditions hold:

\begin{enumerate}
\def\labelenumi{\arabic{enumi}.}
\item
  \(\beta\) is positive in the linearithmic specification,
  Equation~\ref{eq-linearithmic-trend}.
\item
  \(\delta > 1\) and \(\beta>0\) in the polynomial specification,
  Equation~\ref{eq-polynomial-trend}.
\end{enumerate}

\end{theorem}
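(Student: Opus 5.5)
The plan is to verify Definition~\ref{def-acceleration} directly: compute the second derivative of the trend in each specification and show it is strictly positive on the relevant domain $t>0$ (time is indexed from a positive origin, as required anyway for $\log(t)$ to be defined). The error term $\varepsilon_t$ plays no role, since acceleration is a property of the trend function $x(t)$. So in each case I will work with the deterministic part only, treat $t$ as a continuous variable, and differentiate twice.

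For the linearithmic specification, Equation~\ref{eq-linearithmic-trend}, the trend is $x(t)=\alpha+\beta t\log(t)$. The product rule gives
\[
\frac{dx}{dt}=\beta\bigl(\log(t)+1\bigr),
\qquad
\frac{d^2x}{dt^2}=\frac{\beta}{t}.
\]
For $t>0$ this is strictly positive exactly when $\beta>0$, which proves condition 1. I would also note that the first derivative $\beta(\log t+1)$ is increasing without bound. This matches the growth statement in Lemma~\ref{lem-linearithmic-speed}, although that lemma is not logically needed here.

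For the polynomial specification, Equation~\ref{eq-polynomial-trend}, the trend is $x(t)=\alpha+\beta t^\delta$. Differentiating twice gives
\[
\frac{d^2x}{dt^2}=\beta\,\delta(\delta-1)\,t^{\delta-2}.
\]
For $t>0$ the factor $t^{\delta-2}$ is strictly positive for every real $\delta$. If $\delta>1$, then $\delta>0$ and $\delta-1>0$, so the product is positive whenever $\beta>0$. This proves condition 2.

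The computations are routine. The only point needing care is the domain: $\log(t)$ and non-integer powers $t^{\delta}$ require $t>0$. I will therefore state explicitly that time is measured from a positive origin, for example $t=1,2,\dots$ as in the regressions, and that the second-derivative condition is checked on $(0,\infty)$. With that made explicit, nothing else can go wrong. I would also remark that the converse is not claimed: these are sufficient conditions, consistent with the "if any of the following conditions hold" wording of Theorem~\ref{thm-supralinear-trend}.
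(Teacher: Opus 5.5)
Your proposal is correct and is essentially the paper's proof: both check the definition of acceleration directly, computing $\frac{d^2x}{dt^2}=\beta/t$ for the linearithmic trend and $\beta\delta(\delta-1)t^{\delta-2}$ for the polynomial trend. Both then conclude positivity from $\beta>0$, $\delta>1$ and $t>0$. Your explicit restriction to the domain $t>0$ is the same condition the paper imposes.
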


\begin{proof}
The proof follows directly from the definition of acceleration,
Equation~\ref{eq-acceleration}.

For 1., note that: \[
\frac{d^2 x_t}{dt^2} = \beta \frac{d^2 \log(t)t}{dt^2} = \beta \frac{1}{t} > 0,
\]

if \(\beta > 0\) and \(t > 0\).

For 2., note that:

\[
\frac{d^2 x_t}{dt^2} = \beta \frac{d^2 (t^\delta)}{dt^2} = \beta \delta (\delta - 1) t^{\delta - 2} > 0,
\]

if \(\beta > 0\), \(\delta > 1\) and \(t > 0\).
\end{proof}

Furthermore, the next theorem presents the theoretical design used to
test for acceleration in GMST. The theorem makes use of the concept of
order summability in probability \autocite{berenguer2014summability}. We
say that a sequence of random variables \(X_t\) is \(O_p(\tau)\) if
\(\tau^{-1} \sum_{t=1}^T X_t\) is bounded in probability as
\(T \to \infty\).

\begin{theorem}[]\protect\hypertarget{thm-testing-acceleration}{}\label{thm-testing-acceleration}

Let \(C_t = h_t + \varepsilon_t\), where \(\varepsilon_t\) is \(I(0)\)
or integrated of order zero, for \(t = 1,\ldots, T\); with
\(T\in \mathbb{N}\) and \(h_t = O_p(\kappa)\) is the trend component of
\(C(t)\). Consider the linearithmic specification estimated using OLS:
\begin{equation}\protect\phantomsection\label{eq-trend-specification-appendix}{
C_t = \alpha + \beta \log(t)t + u_t.
}\end{equation}

Then, the following holds:

\begin{itemize}
\tightlist
\item
  If \(h_t\) follows a linear trend, \(\hat{\beta} = O_p(1/\log(T))\) so
  that it goes to zero as \(T\) increases.
\item
  If \(h_t\) follows a linearithmic trend, \(\hat{\beta} = O_p(1)\) so
  that it converges to a positive constant as \(T\) increases.
\item
  If \(h_t\) follows a polynomial trend with degree \(\delta > 1\),
  \(\hat{\beta} \to \infty\) as \(T\) increases.
\end{itemize}

\end{theorem}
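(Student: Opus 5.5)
The plan is to write the OLS slope estimator in its centred form and reduce all three claims to deterministic sums plus a stochastic remainder. Set $z_t = t\log t$ and $\bar z = T^{-1}\sum_{t=1}^T z_t$. Then
\[
\hat\beta = \frac{\sum_{t=1}^T (z_t-\bar z) h_t}{\sum_{t=1}^T (z_t-\bar z)^2} + \frac{\sum_{t=1}^T (z_t-\bar z)\varepsilon_t}{\sum_{t=1}^T (z_t-\bar z)^2}.
\]
We would treat the two terms separately: a deterministic part determined by the form of $h_t$, and a noise part that should vanish.

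For the sums, we would approximate them by integrals. With $t = Tr$, $r\in(0,1]$, we have $z_t = T\log T\, r + T\, r\log r$. The first piece is proportional to $r$, and the second is lower order by a factor $1/\log T$. This gives
\[
\sum_{t=1}^T (z_t-\bar z)^2 \sim T^3\log^2 T \int_0^1 (r-\tfrac12)^2\,dr = \frac{T^3\log^2T}{12}.
\]
For the noise part, since $\varepsilon_t$ is $I(0)$, a standard variance bound (or a functional CLT for weighted sums) gives
\[
\sum_{t=1}^T (z_t-\bar z)\varepsilon_t = O_p\bigl(\bigl(\textstyle\sum_{t=1}^T (z_t-\bar z)^2\bigr)^{1/2}\bigr) = O_p(T^{3/2}\log T).
\]
Hence the noise contribution is $O_p(T^{-3/2}/\log T)$, which is negligible in all three cases.

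The three cases then follow from the deterministic numerator.

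\emph{Linear case.} For $h_t = a+bt$, we get $\sum_{t=1}^T (z_t-\bar z)t \sim b\,T^3\log T/12$ from the leading term. Dividing by the denominator gives $\hat\beta \approx b/\log T = O_p(1/\log T)$.

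\emph{Linearithmic case.} For $h_t = a+\beta_0 t\log t$, the numerator is exactly $\beta_0\sum_{t=1}^T (z_t-\bar z)^2$. So $\hat\beta = \beta_0 + o_p(1)$, which is a positive constant when $\beta_0>0$.

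\emph{Polynomial case.} For $h_t = a+b t^\delta$, we compute
\[
\sum_{t=1}^T (z_t-\bar z)t^\delta \sim b\,T^{\delta+2}\log T\int_0^1 (r-\tfrac12)r^\delta\,dr.
\]
The integral is positive for $\delta>0$, since $r^\delta$ is increasing and so is positively correlated with $r$. The ratio is then of order $T^{\delta-1}/\log T$, which tends to $\infty$ when $\delta>1$ and $b>0$. The lower-order term $T r\log r$ only contributes terms smaller by a factor $1/\log T$ in each case.

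The main obstacle is controlling the second-order term $T r\log r$ in $z_t$ together with the Riemann-sum errors. In the linear case the leading behaviour is only of order $1/\log T$, so we must check that the corrections are genuinely of smaller order; a correction of order $1/\log^2 T$ would be harmless. To do this, we would expand numerator and denominator to second order in $1/\log T$ explicitly, using $\int_0^1 r\log r\,dr = -1/4$ and $\int_0^1 r^2\log r\,dr = -1/9$, and confirm that the ratio is $b/\log T\,(1+O(1/\log T))$. The summability assumption $h_t = O_p(\kappa)$ is used only to make sure these sums are well defined.
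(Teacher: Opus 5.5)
Your proof is correct. It shares the paper's skeleton: the closed-form OLS slope with $z_t=t\log t$, a denominator of order $T^3\log^2T$, a noise contribution of $O_p(T^{3/2}\log T)$ in the numerator, and then a comparison of orders. The execution differs in a way that matters.

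The paper expands the centred numerator as $\sum C_tz_t-\bar C\sum z_t-\bar z\sum C_t+T\bar C\bar z$. It bounds the last three terms by $O_p(\kappa T\log T)$ using the order-summability index $\kappa$, states the order of $\sum h_tz_t$ in each case, and divides orders. This is short, and phrasing it through $\kappa$ hints at stochastic trends. However, it only gives upper bounds, and the subtracted terms are of the same order as $\sum h_tz_t$: for $h_t=t$, $\sum tz_t\sim T^3\log T/3$ while $T\bar h\bar z\sim T^3\log T/4$. So it neither excludes cancellation nor supplies the lower bound needed for $\hat\beta\to\infty$.

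You keep the centred weights $z_t-\bar z$ and compute the integral limits with their constants: $1/12$ for the denominator and the linear numerator, and $\delta/(2(\delta+1)(\delta+2))>0$ for the polynomial numerator. This fills exactly that gap and gives sharp rates: $\hat\beta\log T\to b$ in probability, and $\hat\beta$ of exact order $T^{\delta-1}/\log T$. In the linearithmic case, your exact identity $\hat\beta=\beta_0+O_p(T^{-3/2}/\log T)$ is cleaner than anything the paper's order bookkeeping yields. You also correctly make the sign conditions $b>0$ and $\beta_0>0$ explicit.

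Two small remarks. First, your approach does not need the hypothesis $h_t=O_p(\kappa)$ at all; in the paper it is what controls the cross terms. Second, the explicit second-order expansion you flag as the main obstacle is unnecessary. The numerator $bT^3\log T\,(1/12+O(1/\log T))$ and the denominator $T^3\log^2T\,(1/12+O(1/\log T))$ already give $\hat\beta=(b/\log T)(1+O(1/\log T))$.
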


\begin{proof}
The proof follows directly from the OLS solution in
Equation~\ref{eq-trend-specification-appendix} and the definition of
\(O_p(\cdot)\).

Let \(z_t = \log(t) t\) and note that the orders of probability are
\(\kappa = T^2\), \(\kappa = T^2 \log(T)\), and
\(\kappa = T^{\delta+1}\), for linear, linearithmic, and polynomial
trends with degree \(\delta\), respectively.

The OLS estimator for \(\beta\) is given by:

\[\hat{\beta} = \frac{\sum_{t=1}^T (C_t - \bar{C})(z_t-\bar{z})}{\sum_{t=1}^T (z_t-\bar{z})^{2}}.\]

Standard asymptotic theory implies that the denominator satisfies
\autocite{deBruijn1958Asymptotic}: \[
\sum_{t=1}^T (z_t-\bar{z})^{2} = O_p(T^3 \log^2(T)).
\]

For the numerator, the second to fourth terms are given by:

\[
\bar{C} \sum_{t=1}^T z_t = \sum_{t=1}^T C(t) \bar{z}  = \left( \sum_{t=1}^T C(t)\right)\left(\frac{1}{T} \sum_{t=1}^T z_t \right) = O_p(\kappa T \log(T)),
\]

where we have used standard properties of the order in probability for
products and the fact that the sums including \(\varepsilon_t\) are of
lower order than those including \(h_t\), since \(\varepsilon_t\) is
\(I(0)\).

For the first term, note that it satisfies: \[
\sum_{t=1}^T C_tz_t = \sum_{t=1}^T (h_t + \varepsilon_t)z_t = \sum_{t=1}^T h_t z_t + \sum_{t=1}^T \varepsilon_t z_t,
\]

where the terms involving \(\varepsilon_t\) are \(O_p(T^{3/2} \log(T))\)
due to the properties of \(I(0)\) processes. These are dominated by the
terms involving \(h_t\), as shown below.

To obtain the order of the first term in the numerator, we consider the
three cases for \(h_t\):

\[
\sum_{t=1}^T h_t z_t = 
  \begin{cases} 
    O_p(T^3 \log(T)), & \text{if } h_t = t\\
    O_p(T^3 \log^2(T)), & \text{if } h_t = \log(t) t\\
    O_p(T^{\delta + 2} \log(T)), & \text{if } h_t = t^\delta, \delta > 1
  \end{cases}
\]

Note that all these sums dominate the terms involving \(\varepsilon_t\).

Combining the results above, we obtain the results in the theorem.
\end{proof}

Theorem~\ref{thm-testing-acceleration} shows that the conditions for
acceleration can be tested by evaluating the behaviour of the OLS
estimator \(\hat{\beta}\) in the linearithmic specification.

Note that the linearithmic specification must be understood as an
approximation to an unknown trend function, potentially stochastic, of
temperature. The true trending dynamics may be more complex, but the
specification provides a useful framework for testing for acceleration
in temperature by providing a lower bound for acceleration in a
polynomial trend. Hence, if the conditions for acceleration in
Equation~\ref{eq-linearithmic-trend} are satisfied, then there is
evidence that the trending dynamics are increasing at a supralinear
rate. Furthermore, the trend does not have to be deterministic; the
analysis extends to stochastic trends growing at supralinear rates.

\subsection*{Breaks or Supralinear
Trends?}\label{breaks-or-supralinear-trends}
\addcontentsline{toc}{subsection}{Breaks or Supralinear Trends?}

The presence of supralinear trends may be confused with structural
breaks in the data. Classical tests for structural breaks assume that
the trend is linear within each segment determined by the breaks
\autocite{andrews1993tests}. Formally, consider the following segmented
linear trend model:

\[
h_t = \beta_1 t + \varepsilon_t, \quad t = 1, 2, \ldots, k, 
\] \[
h_t = \beta_2 t + \varepsilon_t, \quad t = k+1, \ldots, T, 
\]

where \(k\) is the break point, \(T\) is the total number of
observations, and \(\varepsilon_t\) is a stationary error term.

Let \(\hat{\beta}_1(k)\) be the OLS estimator of \(\beta_1\) based on
the first \(k\) observations, and \(\hat{\beta}_2(k)\) be the OLS
estimator of \(\beta_2\) based on the last \(T-k\) observations. Define
the sum of squared residuals for the whole sample as a function of the
break point \(k\):

\[
S_T(k) := \sum_{t=1}^k (C_t - \hat{\beta}_1(k) t)^2 + \sum_{t=k+1}^T (C_t - \hat{\beta}_2(k) t)^2.
\]

Tests for structural breaks estimate the break point \(\hat{k}\) as the
value of \(k\) that minimizes the sum of squared residuals across both
segments\autocite{andrews1993tests}: \[
\hat{k} = \text{argmin}_{1 \leq k \leq T} S_T(k).
\]

Theorem~\ref{thm-breaks-supralinear-trend} shows that, if the true trend
is linearithmic or polynomial with degree greater than one, then the
test for structural breaks will find evidence of breaks in the linear
trend. In contrast, Lemma~\ref{lem-breaks-linear-trend} shows that no
breaks will be found if the true trend is linear.

\begin{theorem}[]\protect\hypertarget{thm-breaks-supralinear-trend}{}\label{thm-breaks-supralinear-trend}

Let \(C_t = h_t + I(0)\), such that \(h_t\) is a linearithmic trend as
in Equation~\ref{eq-linearithmic-trend}, or a polynomial trend as in
Equation~\ref{eq-polynomial-trend} with \(\delta > 1\). Then, with
probability approaching one as \(T \to \infty\), \[
S_T(k) <  S_T(T) = S_T(1),
\] for every \(1 < k < T\).

\end{theorem}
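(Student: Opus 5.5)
The plan starts from the fact that the one-slope regression through the origin is nested in the two-segment model. Write $A_1(k)=\sum_{t\le k}t^2$ and $A_2(k)=\sum_{t>k}t^2$, and let $\hat\beta(T)=\sum_{t=1}^T C_t t/\sum_{t=1}^T t^2$ be the full-sample slope. Then $\hat\beta(T)=\bigl(A_1\hat\beta_1(k)+A_2\hat\beta_2(k)\bigr)/(A_1+A_2)$, and a routine Pythagorean decomposition should give the exact identity
\[
S_T(T)-S_T(k)=\frac{A_1(k)A_2(k)}{A_1(k)+A_2(k)}\bigl(\hat\beta_2(k)-\hat\beta_1(k)\bigr)^2 .
\]
I will read $S_T(1)=S_T(T)$ as the paper's convention that both endpoints mean the single-regime fit. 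Since $A_1A_2>0$ for $1<k<T$, the theorem reduces to one claim: with probability approaching one, $\hat\beta_1(k)\neq\hat\beta_2(k)$ simultaneously for all interior $k$. I will in fact aim for $\hat\beta_2(k)-\hat\beta_1(k)>0$.

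\textbf{Step 1: deterministic part.} Split each segment slope as $\hat\beta_j(k)=b_j(k)+e_j(k)$. Here $b_j$ is the slope computed from $h_t$ alone, and $e_j$ comes from $\varepsilon_t$. Put $g(t)=h_t/t$. Then $b_1(k)=\sum_{t\le k}t^2g(t)/A_1$ and $b_2(k)=\sum_{t>k}t^2g(t)/A_2$ are $t^2$-weighted averages of $g$ over the two segments.
\begin{itemize}
\item Linearithmic trend (taking $\alpha=0$, as in the break model): $g(t)=\beta\log t$ is strictly increasing. Hence $b_2(k)-b_1(k)\ge \beta\bigl(\bar\ell_2(k)-\bar\ell_1(k)\bigr)>0$, where $\bar\ell_j$ are the weighted averages of $\log t$.
\item Polynomial trend: $g(t)=\beta t^{\delta-1}$ is strictly increasing when $\delta>1$, with the same conclusion.
\end{itemize}
Explicit sum asymptotics in the style of the proof of Theorem~\ref{thm-testing-acceleration} will give a quantitative lower bound on this gap $D(k)=b_2(k)-b_1(k)$. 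For $k=cT$ with $c$ in a compact subset of $(0,1)$, the gap is of order $\log$-constant times $\beta$ in the linearithmic case and of order $T^{\delta-1}$ in the polynomial case. For $k$ near $1$ or near $T$ it shrinks, and I will track it as a function of $k$ and $T-k$ separately.

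\textbf{Step 2: stochastic part.} Using the $I(0)$ assumption through a functional CLT, or a maximal inequality for weighted partial sums, I will bound the noise terms uniformly in $k$:
\begin{itemize}
\item $\sup_k |e_1(k)|\,k^{3/2}=O_p(1)$ in the relevant sense;
\item $|e_2(k)|=O_p\bigl((\sum_{t>k}t^2)^{-1/2}\bigr)$ uniformly.
\end{itemize}
It then remains to show that $D(k)$ dominates $|e_1(k)|+|e_2(k)|$ uniformly over $1<k<T$, so that the sign of $\hat\beta_2(k)-\hat\beta_1(k)$ is positive with probability approaching one.

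\textbf{Main obstacle: the edges.} I expect the hardest part to be the boundary regions, meaning small $k$ and especially $k$ close to $T$.
\begin{itemize}
\item Near $k=T$ the second segment has few points. There $D(k)\approx g(T)-\text{(average of }g)$, which is still of order $\beta$ (linearithmic) or $T^{\delta-1}$ (polynomial). Meanwhile $e_2$ has size roughly $\sigma/\bigl(T\sqrt{T-k}\bigr)$, which is small because of the $t^2$ weights, so domination should still hold.
\item Near $k=1$, $e_1$ is $O_p(1)$ while $D(k)$ is of order $g(T)-g(1)$, so the trend gap should again win for large $T$.
\end{itemize}
The delicate point is making the union over all $k$ rigorous: I would use a maximal inequality rather than pointwise bounds. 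If the extreme edges resist, I would fall back on the standard trimming $k\in[\pi T,(1-\pi)T]$ used by \textcite{andrews1993tests}. A nonzero intercept in Equation~\ref{eq-polynomial-trend} adds $\alpha/t$ to $g$, which breaks monotonicity only for $t=O(1)$. I would absorb this into the small-$k$ edge analysis.
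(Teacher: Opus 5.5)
Your proposal is correct, and it takes a genuinely different route from the paper.

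\textbf{How the paper argues.} The paper writes \(S_1(k)\), \(S_2(k)\) and \(S_T(T)\) as \(\sum C_t^2\) minus a projection term. It replaces them by deterministic expressions on the grounds that the noise terms vanish. It then compares projection terms by observing that \(\sum_{t=1}^T t^2\) exceeds each segment's \(\sum t^2\). Write \(a_j\) and \(A_j\) for the segment sums of \(t^2\log t\) and \(t^2\). That comparison only gives \((a_1^2+a_2^2)/(A_1+A_2)<a_1^2/A_1+a_2^2/A_2\), whereas the full-sample term is \((a_1+a_2)^2/(A_1+A_2)\). The cross term \(2a_1a_2/(A_1+A_2)\) is lost. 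Indeed, the same reasoning with \(\log t\) replaced by \(1\) would give a strict inequality for a linear trend, contradicting Lemma~\ref{lem-breaks-linear-trend}.

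\textbf{What your route buys.} Your identity \(S_T(T)-S_T(k)=\frac{A_1A_2}{A_1+A_2}\bigl(\hat\beta_2(k)-\hat\beta_1(k)\bigr)^2\) is exactly the Cauchy--Schwarz inequality with its equality case made explicit. So strictness is equivalent to unequal segment slopes. Your Step~1 derives this from the strict monotonicity of \(h_t/t\), which is the property that genuinely separates supralinear from linear trends. The identity also makes \(\sum C_t^2\) cancel exactly, so the noise enters only through \(e_2(k)-e_1(k)\) and no vanishing heuristic is needed. The price is a maximal-inequality step for uniformity in \(k\), which the paper does not address at all.

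\textbf{Refinements.} The claim \(\sup_{k\le T}k^{3/2}|e_1(k)|=O_p(1)\) fails in general (it has iterated-logarithm growth), but much less is needed.
\begin{itemize}
\item With \(S_j=\sum_{t\le j}\varepsilon_t\), summation by parts gives \(|e_1(k)|\le 6k^{-1}\sup_j|S_j/j|\), which is finite by the strong law.
\item Since \(A_2(k)\ge T^2\), you get \(\sup_k|e_2(k)|\le 4T^{-1}\max_{j\le T}|S_j|=O_p(T^{-1/2})\).
\item In the linearithmic case with \(\alpha=0\), \(D(k)\ge\beta\bigl(\log(k+1)-\bar{\ell}_1(k)\bigr)\). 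This is bounded below by a fixed positive multiple of \(\beta\), and is of order \(\beta\log(T/k)\) for \(k=o(T)\).
\end{itemize}
These bounds close the argument over all \(k\) without trimming. In the polynomial case, \(D(k)\) is of order at least \(T^{\delta-1}\) uniformly, which makes it immediate.

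Finally, under the paper's literal definition \(S_T(1)\) is not the single-regime fit. Your identity with \(A_1(1)=1\) gives \(S_T(1)<S_T(T)\) with probability approaching one. So the stated equality \(S_T(T)=S_T(1)\) holds only under your conventional reading, and your argument covers \(k=1\) as well.
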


\begin{proof}
Let \(k\) be such that \(1 < k < T\). Write \(S_T(k)\) as:

\[
S_T(k) = S_1(k) + S_2(k),
\] where,

\[
S_1(k) = \sum_{t=1}^k (C_t - \hat{\beta}_1(k) t)^2, \quad \text{and} \quad
S_2(k) = \sum_{t=k+1}^T (C_t - \hat{\beta}_2(k) t)^2.
\]

Hence, to prove the theorem, it is sufficient to show that, \[
S_1(k) + S_2(k) < S_T(T) = S_T(1).
\]

Note that,

\[ 
\hat{\beta}_1(k) = \left(\sum_{t=1}^k t^2\right)^{-1}\sum_{t=1}^k C_t t, \quad \text{and} \quad
\hat{\beta}_2(k) = \left(\sum_{t=k+1}^T t^2\right)^{-1}\sum_{t=k+1}^T C_t t.
\]

Hence, it follows that,

\[
S_1(k) = \sum_{t=1}^k C_t^2 - \left(\sum_{t=1}^k t^2\right)^{-1}\left(\sum_{t=1}^k C_t t\right)^2,
\] \[
S_2(k) = \sum_{t=k+1}^T C_t^2 - \left(\sum_{t=k+1}^T t^2\right)^{-1}\left(\sum_{t=k+1}^T C_t t\right)^2.
\]

For the case of a linearithmic trend, using that
\(C_t = \beta t \log(t) + \varepsilon_t\), where \(\varepsilon_t\) is
the \(I(0)\) error term, it can be shown that, with probability
approaching one as \(T \to \infty\),

\[
S_1(k) \to \beta^2 \left[ \sum_{t=1}^k t^2 \log^2(t) - \left(\sum_{t=1}^k t^2\right)^{-1}\left( \sum_{t=1}^k t^2 \log(t) \right)^2 \right],
\] \[
S_2(k) \to \beta^2 \left[ \sum_{t=k+1}^T t^2 \log^2(t) - \left(\sum_{t=k+1}^T t^2\right)^{-1}\left( \sum_{t=k+1}^T t^2 \log(t) \right)^2 \right],
\]

where the terms involving the error \(\varepsilon_t\) vanish
asymptotically.

On the other hand, it can be shown that, with probability approaching
one as \(T \to \infty\), \[
S_T(T) = S_T(1) \to \beta^2 \left[ \sum_{t=1}^T t^2 \log^2(t) - \left(\sum_{t=1}^T t^2\right)^{-1}\left( \sum_{t=1}^T t^2 \log(t) \right)^2 \right].
\]

Note that the first terms in the expressions for \(S_1(k)+S_2(k)\), and
\(S_T(T)\) are equal. The proof then follows from:

\[
\frac{ \left( \sum_{t=1}^k t^2 \log(t) \right)^2 }{\sum_{t=1}^T t^2} < \frac{ \left( \sum_{t=1}^k t^2 \log(t) \right)^2 }{\sum_{t=1}^k t^2}, \quad \text{for } 1 \leq j \leq k, 
\] and \[
\frac{ \left( \sum_{t=k+1}^T t^2 \log(t) \right)^2 }{\sum_{t=1}^T t^2} < \frac{ \left( \sum_{t=k+1}^T t^2 \log(t) \right)^2 }{\sum_{t=k+1}^T t^2}, \quad \text{for } k+1 \leq j \leq T,
\]

which is true since the denominators on the left-hand side are larger
than those on the right-hand side.

For the case of a polynomial trend with degree \(\delta > 1\), the proof
follows analogously by replacing \(t \log t\) with \(t^\delta\) in the
expressions above.
\end{proof}

\begin{lemma}[]\protect\hypertarget{lem-breaks-linear-trend}{}\label{lem-breaks-linear-trend}

Let \(C_t = h_t + I(0)\), such that \(h_t\) is a linear trend. Then,
with probability approaching one as \(T \to \infty\), \[
S_T(k) =  S_T(T) = S_T(1),
\]

for every \(1 < k < T\).

\end{lemma}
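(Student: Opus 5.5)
I would mirror the structure of the proof of Theorem~\ref{thm-breaks-supralinear-trend}, replacing the supralinear signal by a linear one. Write $C_t = \beta t + \varepsilon_t$ with $\varepsilon_t$ an $I(0)$ error. As before, decompose $S_T(k) = S_1(k) + S_2(k)$. Use the closed forms
\[
S_1(k) = \sum_{t=1}^k C_t^2 - \Big(\sum_{t=1}^k t^2\Big)^{-1}\Big(\sum_{t=1}^k C_t t\Big)^2,
\qquad
S_2(k) = \sum_{t=k+1}^T C_t^2 - \Big(\sum_{t=k+1}^T t^2\Big)^{-1}\Big(\sum_{t=k+1}^T C_t t\Big)^2 .
\]
The key observation is that the deterministic part is exactly linear, so each segment's regression through the origin reproduces it perfectly. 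Substituting $C_t t = \beta t^2 + \varepsilon_t t$, the signal contribution to each segment is
\[
\beta^2\sum t^2 - \Big(\sum t^2\Big)^{-1}\Big(\beta\sum t^2\Big)^2 = 0.
\]
This holds on $\{1,\dots,k\}$, on $\{k+1,\dots,T\}$, and on the full sample. Hence in every case the sum of squared residuals reduces to a pure error expression.

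\textbf{Step 1: expand the residual sums.} Expanding $S_1(k)$ gives
\[
S_1(k)=\sum_{t=1}^k \varepsilon_t^2 - \Big(\sum_{t=1}^k t^2\Big)^{-1}\Big(\sum_{t=1}^k \varepsilon_t t\Big)^2 ,
\]
because the cross terms $2\beta\sum\varepsilon_t t$ cancel identically. The analogous identity holds for $S_2(k)$ and for $S_T(T)=S_T(1)$. So
\[
S_T(k) - S_T(T) = R_T - R_1(k) - R_2(k),
\]
where each $R$ is a squared projection of $\varepsilon$ onto $t$ over the corresponding index set.

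\textbf{Step 2: show the projection terms are negligible.} By the $I(0)$ property used in the proof of Theorem~\ref{thm-testing-acceleration}, $\sum_{t\in A}\varepsilon_t t = O_p\big((\sum_{t\in A} t^2)^{1/2}\big)$. Each $R$ is therefore $O_p(1)$, while $\sum_t\varepsilon_t^2$ is of order $T$. Relative to the level of $S_T(T)$, the differences are $O_p(1/T)\to 0$. In the paper's asymptotic sense, where error terms "vanish asymptotically", this gives $S_T(k)\to S_T(T)=S_T(1)$ for every $k$. In other words, the signal part of every split is identically zero, so no break is favoured. By contrast, in Theorem~\ref{thm-breaks-supralinear-trend} the strict inequality came from a nonzero signal term.

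\textbf{Main obstacle.} Literal equality $S_T(k)=S_T(T)$ cannot hold in finite samples with noise. Splitting always weakly lowers the sum of squared residuals, since $R_1(k)+R_2(k)\ge R_T$ by a Cauchy–Schwarz-type argument. So the step I expect to be delicate is stating precisely what the lemma asserts. I would interpret it as equality of the leading (signal) terms: the normalized difference $(S_T(T)-S_T(k))/S_T(T)$ converges to zero in probability, and the deterministic components coincide exactly. I would argue this by bounding $R_1(k)+R_2(k)-R_T$ uniformly in $k$ using a maximal inequality for partial sums of $\varepsilon_t t$. This yields $O_p(\log\log T)$ or $O_p(\log T)$, which is negligible against the $O_p(T)$ level.
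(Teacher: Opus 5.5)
Your proposal is correct and follows the paper's own argument. The paper substitutes $C_t=\beta t+\varepsilon_t$, drops the error contributions as asymptotically negligible, and observes that $\beta^2\bigl[\sum t^2-(\sum t^2)^{-1}(\sum t^2)^2\bigr]=0$ on each segment and on the full sample, which is your signal-cancellation step.

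Your caveat is also right, and it is what the paper's ``$\to$'' glosses over. Let $a,b$ be the two segment sums of $\varepsilon_t t$ and $p,q$ those of $t^2$. Then $S_T(T)-S_T(k)=(aq-bp)^2/\{pq(p+q)\}$, which is $O_p(1)$ but almost surely nonzero for continuous errors. This includes $k=1$, so even $S_T(T)=S_T(1)$ fails. The lemma therefore holds only in your leading-order sense, relative to a level of order $T$. It also relies on $h_t=\beta t$ having no intercept and on the errors having mean zero.
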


\begin{proof}
The proof is analogous to that of
Theorem~\ref{thm-breaks-supralinear-trend} by replacing \(t \log t\)
with \(t\) in the expressions.

In particular, note that the expressions for \(S_1(k)\), \(S_2(k)\), and
\(S_T(T)\) are given by:

\[
S_1(k) \to \beta^2 \left[ \sum_{t=1}^k t^2 - \left(\sum_{t=1}^k t^2\right)^{-1}\left( \sum_{t=1}^k t^2 \right)^2 \right] = 0,
\] \[
S_2(k) \to \beta^2 \left[ \sum_{t=k+1}^T t^2 - \left(\sum_{t=k+1}^T t^2\right)^{-1}\left( \sum_{t=k+1}^T t^2 \right)^2 \right] = 0,
\] \[
S_T(T) \to \beta^2 \left[ \sum_{t=1}^T t^2 - \left(\sum_{t=1}^T t^2\right)^{-1}\left( \sum_{t=1}^T t^2 \right)^2 \right] = 0.
\]

Hence, the result follows.
\end{proof}

Our results suggest that the observed acceleration in global temperature
trends may be driven by supralinear trends rather than structural
breaks. In the short-term, it may be difficult to distinguish between
these two phenomena. Nevertheless, if the trend remains supralinear, the
null hypothesis of no break in the linear trend will be rejected as the
sample size increases; see Theorem~\ref{thm-breaks-supralinear-trend}.
Note that this is not the case if the true trend is linear, in which
case no break will be detected asymptotically; see
Lemma~\ref{lem-breaks-linear-trend}.

\section*{Estimation Details}\label{estimation-details}
\addcontentsline{toc}{section}{Estimation Details}

\subsection*{Linearithmic and Quadratic
Specification}\label{linearithmic-and-quadratic-specification}
\addcontentsline{toc}{subsection}{Linearithmic and Quadratic
Specification}

To estimate the linearithmic specification, we consider the following
regression model:

\begin{equation}\protect\phantomsection\label{eq-logxlinear-regression}{
C_t = \alpha + \beta \log(t)t + \varepsilon_t,
}\end{equation}

where \(\log(t) t\) allows us to capture potential supralinear trends in
the data. As shown in Theorem~\ref{thm-supralinear-trend}, the
linearithmic trend grows faster than a linear trend but slower than any
polynomial trend with degree greater than one. Hence, it provides a
lower bound for supralinear trends.

The coefficients are estimated using OLS regression, while standard
errors and confidence intervals are calculated using a Heteroskedastic
and Autocorrelation Consistent (HAC) estimator to account for potential
autocorrelation in the error term. We use a Bartlett kernel with
Newey-West bandwidth selection \autocite{newey-west-1987}. As shown in
Theorem~\ref{thm-testing-acceleration}, if the coefficient \(\beta\) is
positive and statistically significant, then there is evidence of
acceleration in GMST.

Furthermore, we consider a specification that includes additional
covariates to account for the El Niño effect and solar activity. The
coefficients are estimated using OLS regression with HAC standard
errors. Note that the effect of the additional covariates can be
estimated simultaneously, increasing the precision of the estimates
compared to a two-step procedure.

Analogously, we estimate a quadratic specification. The coefficients are
estimated using OLS regression with HAC standard errors. As shown in
Theorem~\ref{thm-supralinear-trend}, if the coefficient \(\beta\) is
positive and statistically significant, then there is evidence of
acceleration in GMST.

\subsection*{Polynomial Specification}\label{polynomial-specification}
\addcontentsline{toc}{subsection}{Polynomial Specification}

To estimate the general polynomial specification, we consider the
following regression model:
\begin{equation}\protect\phantomsection\label{eq-nlls-regression}{
C_t = \alpha + \beta t^\delta + \gamma_1 N_t + \gamma_2 S_t + \varepsilon_t,
}\end{equation}

where \(\delta\) is the parameter that determines the degree of the
trend, \(\beta\) is the corresponding coefficient, and \(\varepsilon_t\)
is a stationary error term. If \(\delta > 1\) and \(\beta > 0\), then
there is evidence of acceleration in GMST. Note that, unlike the
linearithmic and quadratic specifications, we need to determine the
degree of the time regressor jointly with its coefficient. Hence, we use
Non-linear Least Squares (NLS) \autocite{bates1988nonlinear}. Standard
errors and confidence intervals are calculated using a HAC estimator to
account for potential autocorrelation in the error term. We use a
Bartlett kernel with Newey-West bandwidth selection
\autocite{newey-west-1987}. Similar to the linearithmic specification,
we consider a model that includes additional covariates to account for
the El Niño effect and solar activity.

Figure~\ref{fig-nls-fitted-values} presents the estimated coefficient
for the exponent of the polynomial specification, \(\hat{\delta}\), as a
function of the starting point of the estimation window. The figure
shows that the estimated exponent is close to zero up to 2009. After
2009, the estimated exponent increases and becomes greater than one for
the last estimation windows, providing evidence of acceleration in GMST.
Nevertheless, note that the exponent is less than two, which suggests
that the quadratic specification does not capture all types of
acceleration in the data. This is consistent with the results from the
linearithmic specification, which provides a lower bound for supralinear
trends in the polynomial specification.

\begin{figure}

\centering{

\includegraphics[width=0.5\linewidth,height=\textheight,keepaspectratio]{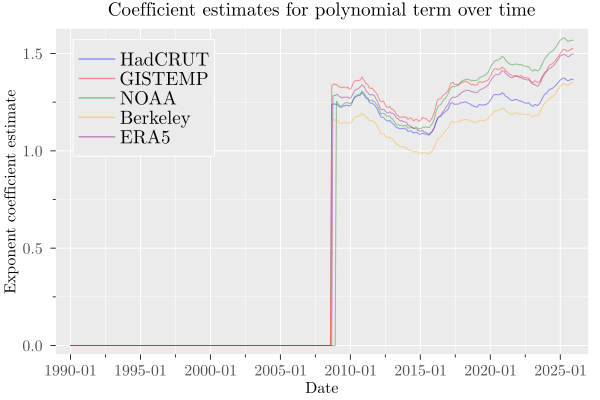}

}

\caption{\label{fig-nls-fitted-values}Linearithmic fit without
additional covariates (1970-2025).}

\end{figure}%

\subsection*{Bootstrap adjustment for multiple
testing}\label{bootstrap-adjustment-for-multiple-testing}
\addcontentsline{toc}{subsection}{Bootstrap adjustment for multiple
testing}

To control for multiple testing across expanding windows, we use a
max-\(|t|\) wild-bootstrap algorithm, following the resampling-based
family-wise error control logic in \autocite{westfall_young_1993}. Let
the starting date be fixed at \(\tau_0\), and let the set of window end
points be \(\mathcal{T}=\{\tau_1,\ldots,\tau_m\}\), with window \(j\)
given by \([\tau_0,\tau_j]\). For each window, we test the null of no
acceleration, \(H_{0,j}:\beta_j=0\), where \(\beta_j\) is the
acceleration coefficient of interest (linearithmic or polynomial
coefficient, depending on the specification).

For each window \(j\), we estimate: (i) the unrestricted model,
obtaining the HAC \(t\)-statistic \(t_j\), and (ii) the restricted model
under \(H_{0,j}\), obtaining fitted values \(\widehat{C}^{(0)}_{t,j}\)
and residuals \(\widehat{u}^{(0)}_{t,j}\). The restricted residuals are
centred before resampling. We then generate bootstrap pseudo-series with
Rademacher multipliers:

\[
C^*_{t,j}=\widehat{C}^{(0)}_{t,j}+\widehat{u}^{(0)}_{t,j}\,\xi_t,
\]

where the multipliers \(\xi_t\) are independent and identically
distributed random variables taking values in \(\{-1,+1\}\) with equal
probability:

\[
\xi_t\in\{-1,+1\},\;\Pr(\xi_t=1)=\Pr(\xi_t=-1)=\tfrac12,
\]

using the same multiplier draw across windows within each replication
and truncating it to each window length.

At bootstrap replication \(b=1,\ldots,B\), we re-estimate the
unrestricted model on \(C^*_{t,j}\) for all \(j\in\{1,\ldots,m\}\) and
compute HAC \(t\)-statistics \(t^*_{j,b}\). The max-\(|t|\) statistic
is:

\[
M^*_b = \max_{1\leq j\leq m} \left| t^*_{j,b} \right|.
\]

The family-wise adjusted \(p\)-value for window \(j\) is computed as:

\[
\widehat{p}^{\,adj}_j = \frac{1+\sum_{b=1}^{B}\mathbf{1}\{M^*_b\geq |t_j|\}}{B+1},
\]

where \(\mathbf{1}\{\cdot\}\) is the indicator function. Hence,
rejection at level \(\alpha\) is equivalent to either
\(\widehat{p}^{\,adj}_j\leq \alpha\), or \(|t_j|>c_{1-\alpha}\), where
\(c_{1-\alpha}\) is the empirical \((1-\alpha)\) quantile of
\(\{M^*_b\}_{b=1}^B\). In the empirical application, we set \(B=1000\)
and \(\alpha=0.05\).

\end{document}